

\documentclass[runningheads,a4paper]{llncs2e/llncs}
\pdfoutput=1  

\usepackage[ruled,linesnumbered]{algorithm2e}
\usepackage{amssymb}
\usepackage{amsmath}
\setcounter{tocdepth}{3}
\usepackage{multirow}
\usepackage{graphicx}
\usepackage{caption}
\usepackage{soul}
\usepackage{subcaption}
\captionsetup{compatibility=false}
\usepackage[colorinlistoftodos]{todonotes} 
\usepackage{url}
\urldef{\mailsa}\path|richard.connor@strath.ac.uk|
\urldef{\mailsb}\path|{lucia.vadicamo,fausto.rabitti}@isti.cnr.it|
\newcommand{\keywords}[1]{\par\addvspace\baselineskip
	\noindent\keywordname\enspace\ignorespaces#1}

\newcommand{\R}[1]{\mathbb{R}^{#1}}



\newcommand{\ns}{$n$-simplex\xspace}

\graphicspath{{figs/},{figs/tmp/},{figs/MIRFlickr/}}
\newcommand{\h}[1]{{#1}} 


\begin{document}

\title{High-Dimensional Simplexes for \h{Super}metric Search}

\titlerunning{High-Dimensional Simplexes for Metric Search}

%
%
\author{Richard Connor\inst{1} \and {Lucia Vadicamo} \inst{2} \and Fausto Rabitti\inst{2}}
\authorrunning{Connor et al.}


\institute{Department of Computer and Information Sciences,\\University of Strathclyde, Glasgow, G1 1XH, United Kingdom\\
	\and
	ISTI -
	CNR, Via Moruzzi 1, 56124 Pisa, Italy\\
	\mailsa\\
	\mailsb}
\maketitle

\begin{abstract}
In 1953, Blumenthal showed that every semi-metric space that is isometrically embeddable in a Hilbert space has the $n$-point property; we have previously called such spaces \emph{supermetric} spaces.
Although this  is  a strictly   stronger property than triangle inequality, it is nonetheless closely related and  many useful metric spaces possess it. These  include Euclidean, Cosine and Jensen-Shannon  spaces of any dimension.

A simple corollary of the $n$-point property is that, for any $(n+1)$ objects sampled from the space, there exists an $n$-dimensional simplex in Euclidean space whose edge lengths correspond to the distances among the objects. 
We show how the construction of such simplexes in  higher dimensions can be used to give arbitrarily tight lower and upper bounds on  distances within the original space.

This allows the construction of an $n$-dimensional Euclidean space, from which lower and upper bounds of the original space can be calculated, and which is itself an indexable space with the $n$-point property. 
For similarity search, the engineering tradeoffs are good: we show  significant reductions in data size and metric cost with little loss of accuracy, leading to a significant  overall improvement in  search performance.
\end{abstract}
\keywords{Supermetric Space $\cdot$ Metric Search $\cdot$ Metric Embedding $\cdot$ Dimensionality Reduction}

\section{Introduction}
\subsubsection{Context}
To set the context, we are interested in searching a (large) finite  set of objects $S$ which is a subset of an infinite set $U$, where $(U, d)$ is a  metric space. The general requirement is to efficiently find members of $S$ which are similar to an arbitrary member of $U$, where the distance function $d$ gives  the only way by which any two objects may be compared. There are many important practical examples captured by this mathematical framework, see for example \cite{Chavez05,zezula2006similarity}. Such spaces are typically searched with reference to a query object $q \in U$.
 A threshold search  for some threshold $t$, based on a query $q \in U$, has the solution set  $\{s \in S \,\, \text{such that}\,\, d(q,s) \le t\}$.
 
 There are three main problems with achieving efficiency when the search space is very large. 
  Most obviously, for very large collections we always require scalability. This is achieved within metric search domains by techniques which avoid searching parts of the collection, typically by using data structures which take advantage of mathematical properties of the distance metrics used.

Secondly, distance metrics are often expensive. When the search space is large, semantic accuracy is important to avoid huge numbers of false positive results -- in the terminology of information retrieval, \emph{precision} becomes relatively more important that \emph{recall}. 
In such cases higher specificity will normally result in a much more expensive metric, for example Jensen-Shannon or Quadratic Form distances, which are much more expensive to compute.
 
Finally, the data objects themselves may be large. For example in the domain of near-duplicate image search, GIST representations give a better semantic comparison then MPEG-7, but occupy around 2KB per image \cite{connor2016quantifying}.

 Even although huge memory is nowadays available, a  large collection of large objects will still require to be paged. For example  a 32-bit architecture can typically address less than 2GB;  a collection of only one million GIST descriptors cannot be accommodated.

\subsubsection{Approaches} 
 In high-dimensional Euclidean spaces, the last two problems can be addressed by various \emph{dimensionality reduction} techniques. In outline, these techniques reduce an $n$-dimensional Euclidean space 
 space into an $m$-dimensional one, where $m < n$. This reduces both the size required to store the data, and the cost of the \h{Euclidean ($\ell_2$)} metric. However this may results  in a loss of precision, which can defeat the purpose if there is an accompanying loss of semantic accuracy with respect to the original data.
 
 In non-Euclidean metric spaces, such techniques are not applicable. There are however various other techniques which use reduced-size object surrogates for an initial indexing or filtering phase. Such techniques may  give approximate results or, if they are guaranteed to return a superset of the solution set, exact search can be performed by re-checking their output against the original data.
 
 \subsubsection{Outline of our Contribution} 
 Here, we present a new technique which can be used in either of these approaches. Using properties of finite isometric embedding, we show a mechanism which allows  spaces with certain properties to be translated into a second, smaller, space. For a metric space $(U,d)$, we describe a family of functions 
 $\phi_n$ 
 which can be created by measuring the distances among $n$ objects sampled from the original space, and which can then be used to create a \emph{surrogate} space:
 %
 \[\phi_n : (U,d) \rightarrow (\mathbb{R}^n,\ell_2)\]
%
with the property
%
\[
\ell_2(\phi_n(u_1),\phi_n(u_2)) \le d(u_1,u_2) \le g(\phi_n(u_1),\phi_n(u_2))
\]

for an associated function $g$. Further, the cost of evaluating $g$ and $\ell_2$ together is almost exactly the same as the cost of  $\ell_2$.

\h{This family of functions can be defined for any metric space which is isometrically embeddable in a Hilbert Space, or equivalently for any space that meets the $n$-point property} \cite{Connor2016:HilbertExclusion}.
The advantages of the \h{proposed} technique are that (a) the $\ell_2$ metric is very much cheaper than  some Hilbert-embeddable metrics; (b) the size of elements of $\mathbb{R}^n$ may be much smaller than elements of $U$, and (c) in many cases we can achieve both of these along with an \emph{increase} in the scalability of the resulting search space.

While not applicable to all purposes, we show that this mechanism may be used to great effect in a number of ``real-world" search spaces. Among other results, we show a benchmark best-performance for SISAP \emph{colors}~\cite{SISAP_man} data set.

\section{Related Work}
\subsubsection{Finite Isometric Embeddings}
are excellently summarised by Blumenthal \cite{blumenthal1933note}.
He uses the phrase  \emph{four-point property} to mean a space that is  4-embeddable in $3$-dimensional Euclidean space ($\ell_2^3$)\h{, i.e. if for any four points $x_1,x_2, x_3, x_4 \in U$ exist a mapping function $f:U \to \ell_2^2$ such that $\ell_2(f(x_i), f(x_j))=d(x_i,x_j)$, for $i,j=1,2,3,4$.} 
Wilson \cite{wilson1932relation} shows various properties of such spaces, and Blumenthal points out that results given by Wilson, when combined with work by Menger \cite{Menger1928}, generalise to show that some
spaces with the four-point property also have the $n$-point property: any $n$ points can be isometrically embedded in \h{a $(n-1)$-dimensional Euclidean space ($\ell_2^{n-1}$)}.
In a  later work, Blumenthal \cite{blumenthal1953} shows that any space which is isometrically embeddable in a Hilbert space has the $n$-point property. This single result applies to many metrics, including Euclidean, Cosine, Jensen-Shannon and Triangular \cite{Connor2016:HilbertExclusion}, and is sufficient for our purposes here.
\subsubsection{Dimensionality Reduction}
  aims to produce low-dimensional encodings of high-dimensional data,  preserving the local structure of some input data. 
  See \cite{Fodor2002survey,yang2006survey} for comprehensive surveys on this topic.

The \emph{Principal Component Analysis} (PCA) \cite{Jolliffe2014:PCA} is  the most popular of the techniques for unsupervised dimensionality reduction. The  idea is to find a linear transformation of $n$-dimensional   to $k$-dimensional  vectors ($k\leq n$) that best preserves the \textit{variance} of the input  data. 
Specifically, PCA projects the data along the direction of its first $k$ principal components, which are the eigenvectors of the covariance matrix of the (centered) input data. 

According to the \emph{Johnson-Lindenstrauss Flattening Lemma} (JL)  (see e.g. \cite[pag. 358]{matousek2013book}), a random projection can also be used to embed a finite set of $n$ euclidean vectors into a $k$-dimensional euclidean space space ($k<n$) with a ``small'' distortion. Specifically the Lemma asserts that for any $n$-points  of $\ell_2$ and every $0<\epsilon<1$ there is a mapping into $\ell_2^k$ that preserves all the interpoint distances within factor $1+\epsilon$, where $k=O(\epsilon^{-2}\log n)$. 
The low dimensional embedding given by the Johnson Lindenstrauss lemma is particularly simple to implement.
\subsubsection{General metric spaces} do not allow either PCA or JL as they require inspection of the coordinate space. \h{Mao et al.} \cite{Mao2010} \h{pointed out that multidimetional-methods can be indirectly applied to metric space by using the \emph{pivot space model}. In that case each metric object is represented by its distance to a finite set of pivots.}  

In the general metric space context,
perhaps the best known technique is \emph{metric Multidimensional Scaling} (MDS) \cite{Cox2008}. MDS aims to  preserve  \emph{inter-point distances}
using  spectral analysis.
 However, when the number $m$ of data points is  large the classical MDS is too expensive in practice due to a requirement for $O(m^2)$ distance computations and spectral decomposition of a $m\times m$ matrix. 

The \emph{Landmark MDS} (LMDS) \cite{DeSilva2004:LMDS} is a fast approximation of MDS. LMDS uses a set of $k$ \emph{landmark} points  to compute $k\times m$ distances of the data points from the pivots. It applies classical MSD to these  points and  uses a distance-based triangulation procedure to project the remaining data points. 


\subsubsection{LAESA}\cite{Mico1994} is a more tractable  mechanism which has been used for metric filtering, rather than approximate search.  $n$ reference objects are selected. For each element of the data, the distances to these points are recorded in a table. At query time, the distances between the query and each reference point are calculated. The table can then be scanned row at a time, and each distance compared; if, for any reference object $p_i$ and data object $s_j$ the absolute difference  $|d(q,p_i) - d(s_j,p_i)| > t$, then from triangle inequality it is impossible for $s_j$ to be within distance $t$ of the query, and the object need not be paged into the main memory.

%
%

\section{The N-Simplex Apical Space}
\label{sec_nsimplex_outline}
In this section we give an informal outline of our new observations on supermetric spaces. They are based on the fact mentioned above that, for any $(n+1)$ objects in the original space, there exists  a simplex in $\ell_2^n$  whose edge lengths correspond to the distances measured in the original space.

In \cite{Connor2016:HilbertExclusion,Connor2016:supermetricSISAP,Connor2016:supermetricIS} we showed a less general result, that any semi-metric which is isometrically embeddable in a Hilbert Space has the four-point property: that is, given all of the distances measured among any four objects in the space, it is possible to construct a tetrahedron in   3D Euclidean Space 
with edge lengths corresponding to those distances.
In \cite{Connor2016:supermetricSISAP,Connor2016:supermetricIS} we showed an important lower-bound property based on this tetrahedral embedding; this is illustrated in Figure \ref{fig:planar_projection}, extended  here with a matching upper-bound.
\begin{figure}[tbp]
	\centering
		{	\includegraphics[trim=30mm 0mm 30mm 0mm,width=0.5\textwidth] {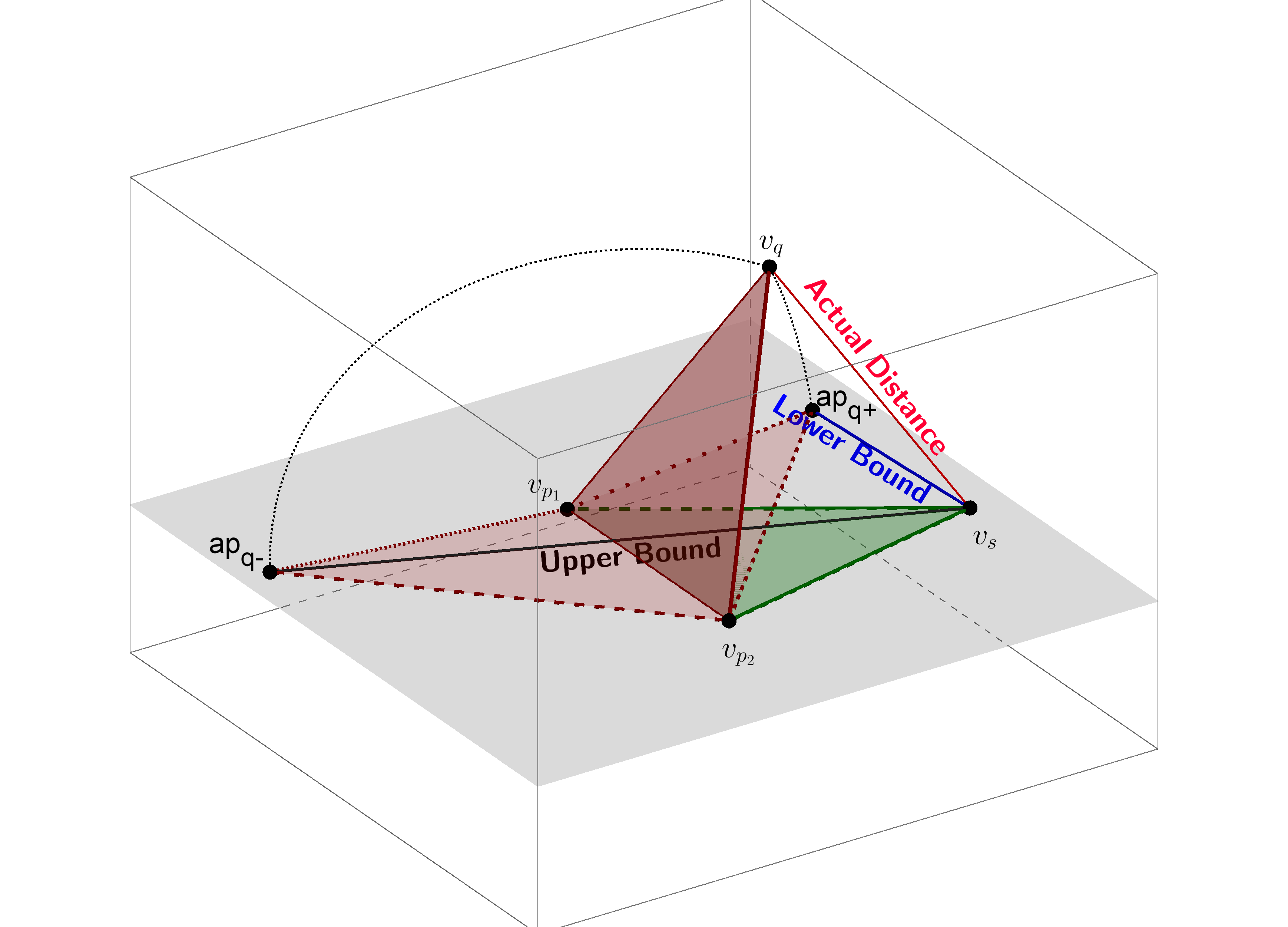}}
	\caption{Tetrahedral embedding of four points into 3D Euclidean space.}
	\label{fig:planar_projection}
\end{figure}

The case in point here is when four objects within the original space have been identified, but only five of the six possible distances have been measured. This corresponds to the situation of an indexing structure based on two reference objects, $p_1$ and $p_2$, which are chosen before a data set $S$ is organised  according to relative distances from these objects. The third object $s$ represents an arbitrary element of  $S$  which has been stored, and the fourth and final object $q$ represents a query over the data. For all possible $s$, we wish to identify those which may be within a threshold distance of $q$, based on some partition of the space constructed before $q$ was available.

Figure \ref{fig:planar_projection} shows an $\ell_2^3$ space into which these four objects have been projected, where for each element $a$ the notation $v_a$ is used to denote a corresponding point in the $\ell_2^3$ space. The only distance which has not been measured is $d(s,q)$; however the 4-point property means that the corresponding distance $\ell_2(v_s,v_q)$ must be able to form the final edge of a tetrahedron. From this Figure, the intuition of the upper and lower bounds on $d(s,q)$ is clear, through rotation of the triangle $v_{p_1}v_{p_2}v_q$ around the line $v_{p_1}v_{p_2}$ until it is coincident with the plane in which $v_{p_1}v_{p_2}v_s$ lies. The two possible orientations give the upper and lower bounds, corresponding to the distances between $v_s$ and the two apexes $ap_{q_-}$ and $ap_{q_+}$  of the two possible planar tetrahedra.

We now understand that this same intuition generalises into many dimensions. 
In the general form, we consider a set $p_i, i \in \{1 \dots n\}$, of $n$ reference objects, whose inter-object distances are used to form a  \emph{base} simplex $\sigma_0$, with vertices $v_{p_1},\dots, v_{p_n}$, in $(n-1)$ dimensions. This corresponds to the line segment $v_{p_1}v_{p_2}$ in the figure, this representing a two-vertex simplex in $\ell_2^1$. The simplex $\sigma_0$ is contained within a hyperplane of the  $\ell_2^n$ space, and the distances from object $s$ to each $p_i$ are used to calculate a new simplex $\sigma_s$, in $\ell_2^n$, consisting of a new apex point $v_s$ set above the base simplex $\sigma_0$. Note that there are two possible positions in $\ell_2^n$ for $v_s$, one on either side of the hyperplane containing $\sigma_0$; we denote these as $v_s^+$, and $v_s^-$ respectively. Now, given  the distances between object $q$ and all  $p_i$, we can  again construct two possible simplexes for $\sigma_q$ with two possible positions for $v_q$, which we denote by $v_q^+$ and $v_q^-$. 

Finally, we note that the act of rotating the triangle around its base also generalises to the concept of rotating the apex point of any simplex around the hyperplane containing its base simplex. Furthermore, the $n$-point property guarantees the existence of a simplex $\sigma_1$ in $\ell_2^{n+1}$ which preserves the distance $d(s,q)$ as $\ell_2(v_s,v_q)$. From  these observations we immediately have the following inequalities:
\begin{align*}
\ell_2^n(v_s^+, v_q^+) \quad \le \quad d(s,q) \quad \le  \quad \ell_2^n(v_s^+, v_q^-)
\end{align*}

To back up this intuition, we include proofs of these inequalities in the  appendix. Meanwhile, we answer the more pragmatic questions which allow these lower and upper bound properties to be useful in the context of similarity search.

\section{Constructing Simplexes from Edge Lengths}
In this section, we show an algorithm for determining Cartesian coordinates for the vertices of a  simplex, given only the distances between points. The algorithm is  inductive, at each stage allowing the apex of an $n$-dimensional simplex to be determined given the coordinates of an $(n-1)$-dimensional simplex, and the distances from the new apex to each vertex in the existing simplex. This is important because, given a fixed base simplex over which many new apexes are to be constructed, the time required to compute each one is linear with the number of dimensions.

A simplex is a generalisation of a triangle or a tetrahedron in arbitrary dimensions. 
%
In one dimension, the simplex is a line segment. In two dimensions it is a convex hull of a triangle, while in three dimensions it is the convex hull of a tetrahedron.
In general, the n-simplex of vertices $p_1,\dots,p_{n+1}$ equals the union of all the line segments joining $p_{n+1}$ to points of the $(n-1)$-simplex of vertices $p_1,\dots,p_{n}$. 

The structure of a simplex in $n$-dimensional space is given as an $n+1$ by $n$ matrix representing the cartesian coordinates of each vertex.
For example, the following matrix represents  four coordinates which are the vertices of a tetrahedron in 3D space:
\[
\begin{bmatrix}
0		&	0		&	0		\\
v_{2,1}	&	0		&	0		\\
v_{3,1}	&	v_{3,2}	&	0	\\
v_{4,1}	&	v_{4,2}	&	v_{4,3}
\end{bmatrix}
\]

For all such matrices $\Sigma$, the invariant that $v_{i,j} = 0$ whenever $j \ge i$ can be  maintained without loss of generality; for any simplex, this can be achieved by rotation and translation within the Euclidean space while maintaining the distances among all the vertices. Furthermore, if we restrict $v_{i,j} \ge 0$ whenever $j = i-1$ then in each row this component represents the \emph{altitude} of the $i^{th}$ point with respect to a base face represented by the matrix cut down from $\Sigma$ by selecting elements above and to the left of that entry.




\subsection{Simplex Construction}
	\begin{algorithm}[tbp]   
	{\KwIn{$n+1$ points $p_1,\dots,p_{n+1}\in (U,d)$}
		\KwOut{$n$-dimensional simplex in $\ell_2^n$ represented by the matrix $\Sigma\in \R{(n+1)\times n}$  }
		\BlankLine
		$\Sigma=  0\in \R{(n+1)\times n}$\;
		\If{$n=1$}{
			$\delta=d(p_1,p_2)$\;
			$\Sigma=\begin{bmatrix}
			0\\
			\delta
			\end{bmatrix}
			$\;
			\Return$\Sigma$;
		}
		$\Sigma_{Base}=$ nSimplexBuild($p_1,\dots,p_{n}$)\;
		$Distances=  0\in \R{n}$\;
		\textbf{for} $1\leq i\leq n$ set $Distances[i]=d(p_i,p_{n+1})$\;
		$newApex=$ ApexAddition($\Sigma_{Base},Distances$)\;
		\textbf{for} $1\leq i\leq n$ and  $1\leq j\leq i-1$ set $\Sigma[i][j]$ to $\Sigma_{Base}[i][j]$\;
		\textbf{for} $1\leq j\leq n$ set $\Sigma[n+1][j]$ to $newApex[j]$\;
		\Return $\Sigma$;
		\caption{nSimplexBuild}\label{alg:nsimplex}}
\end{algorithm}
\begin{algorithm}[tbp]
	\KwIn{A $(n-1)$-dimensional base simplex and the distances between a new (unknown) apex point and the vertices of the base simplex:
		\begin{align*}
		&\Sigma_{\text{Base}} = 
		\begin{bmatrix}
		0			&				&			&			\\
		v_{2,1}	&	0			&			\multicolumn{2}{c}{\text{\huge0}}		\\
		v_{3,1}	& v_{3,2}		&	\ddots		&				 \\
		\colon	& 				&\ddots		&  0			 	\\
		v_{n,1}	&				&\cdots		& v_{n,n-1}
		\end{bmatrix}\in \R{n\times n-1}
		  \\
		  &\\
		& Distances =
		\begin{bmatrix}
		\delta_1 & \cdots & \delta_n
		\end{bmatrix}\in \R{n}
		\end{align*}
	}
	\KwOut{The cartesian coordinates of the new apex point
	}
	\BlankLine
	$Output =
	\begin{bmatrix}
	\delta_1& 0 &\cdots & 0 
	\end{bmatrix} \in \R{n}
	$\; 
	\For{$i=2$ \KwTo $n$}{
		$l = \ell_2(\Sigma_{Base}[i],Output)$\;
		$\delta = Distances[i]$\;
		$x = \Sigma_{Base}[i][i - 1]$\;
		$y = Output[ i - 1]$\;
		$Output[i-1] = y - (\delta^2 - l^2)/2x$\;
		$Output[i] = +\sqrt{y^2 - (Output[i-1])^2}$\;
	}
	\Return $Output$
	\caption{ApexAddition}\label{alg:apex}
\end{algorithm}
This section gives an inductive
algorithm (Algorithm \ref{alg:nsimplex}) to construct a simplex in $n$ dimensions based only on the distances measured among $n+1$ points.

For the base case of a one-dimensional simplex (i.e. two points with a single distance $\delta$) the construction \h{is simply}
$\Sigma=\begin{bmatrix}
0\\
\delta
\end{bmatrix}
$\h{.}
 For an $n$-dimensional simplex, where $n \ge 2$, the distances among $n+1$ points are given. In this case, an $(n-1)$-dimensional simplex is first constructed using the first $n$ points. This simplex is used as a simplex base to which a new apex, the ${(n+1)}^{th}$ point, is added by the following \emph{ApexAddition} algorithm (Algorithm \ref{alg:apex}).
 
  For an arbitrary set of objects $s_i \in S$, the apex $\phi_{n}(s_i)$ can be pre-calculated. 
 When a query is performed,  only $n$ distances in the metric space require to be calculated to discover the new apex $\phi_{n}(q)$  in $\ell_2^n$.  
 
 In essence, the \emph{ApexAddition} algorithm is derived from exactly the same intuition as the lower-bound property explained earlier. Proofs of correctness for both the construction and the lower-bound property are included as an Appendix for the interested reader.

\subsection{Bounds}
Because of the method we use to build simplexes, the final coordinate always represents the altitude of the apex above the hyperplane containing the base simplex. Given this, two apexes exist, according to whether a positive or negative real number is inserted at the final step of the algorithm.

As a direct result of this observation, and those given in Section \ref{sec_nsimplex_outline}, we have the following bounds for any two objects $s_1$ and $s_2$ in the original space:

Let
\begin{align*}
\phi_{n}(s_1)  \quad &=\quad (x_1, x_2,\dots, x_{n-1},x_n)		\\
\phi_{n}(s_2)  \quad &=\quad (y_1, y_2,\dots, y_{n-1},y_n)
\end{align*}
then
\[
\sqrt{\sum_{i=1}^n(x_i - y_i)^2} \quad\le\quad d(s_1,s_2)\quad\le\quad
\sqrt{\sum_{i=1}^{n-1}(x_i - y_i)^2 + (x_n + y_n)^2} 
\]
From the structure of these calculations, it is apparent that they are  likely to converge rapidly around the true distance as the number of dimensions used becomes higher, as we will show in Section \ref{sec_distortion}. It can also be seen  that the cost of calculating both of these values together, especially in higher dimensions, is essentially the same as a simple $\ell_2$ calculation.

Finally, we note that the lower-bound function is a proper metric, but the upper-bound function is not even a semi-metric: even although it is a Euclidean distance in the apex space, one of the domain points is  constructed by reflection across a hyperplane and thus the distance between a pair of identical points is in general non-zero.

\section{Measuring Distortion}
\label{sec_distortion}
\begin{figure}[tbp]
	\centering
	\includegraphics[trim=20mm 80mm 25mm 85mm,width=0.475\textwidth]{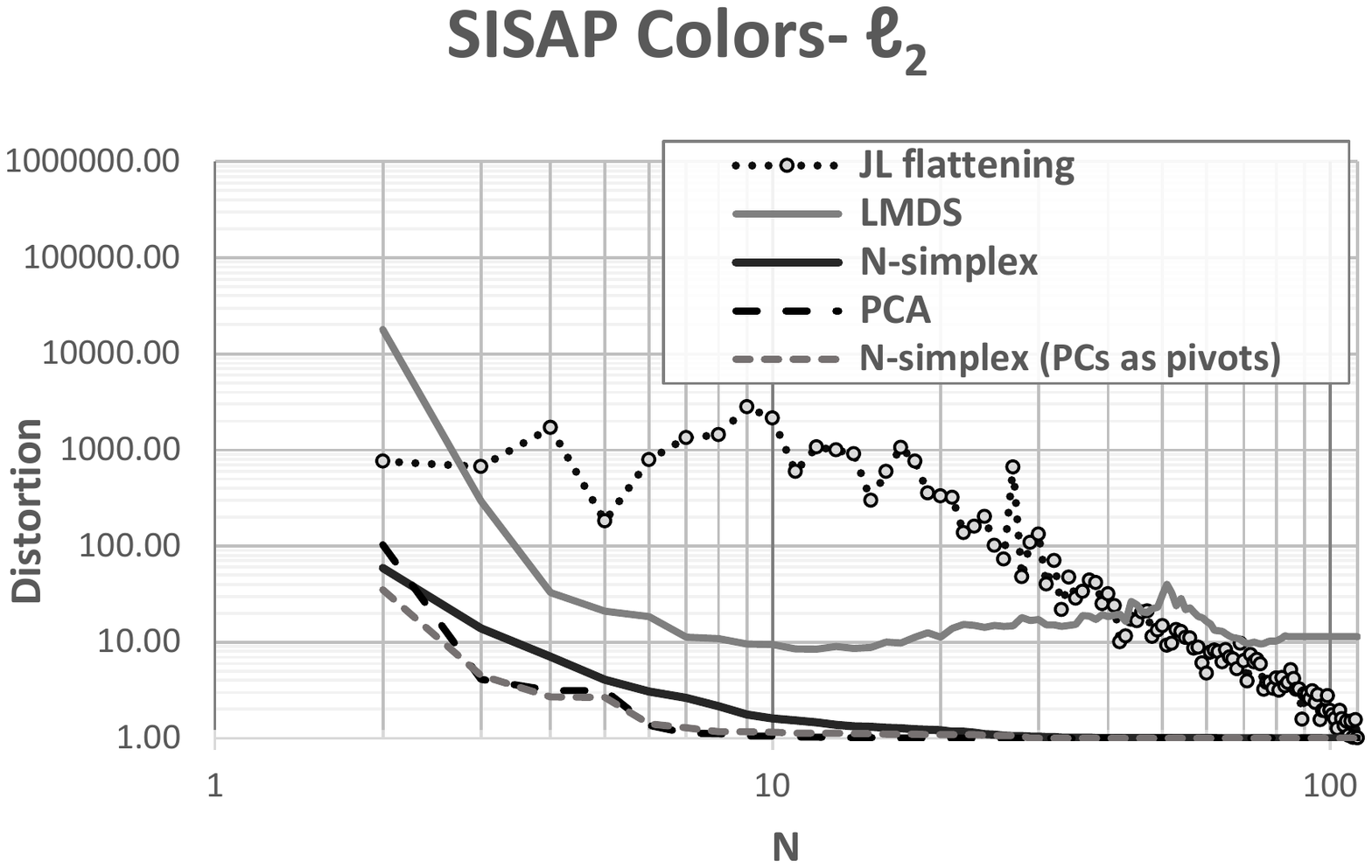}
	\quad  %
	\includegraphics[trim=20mm 80mm 25mm 85mm,width=0.475\textwidth]{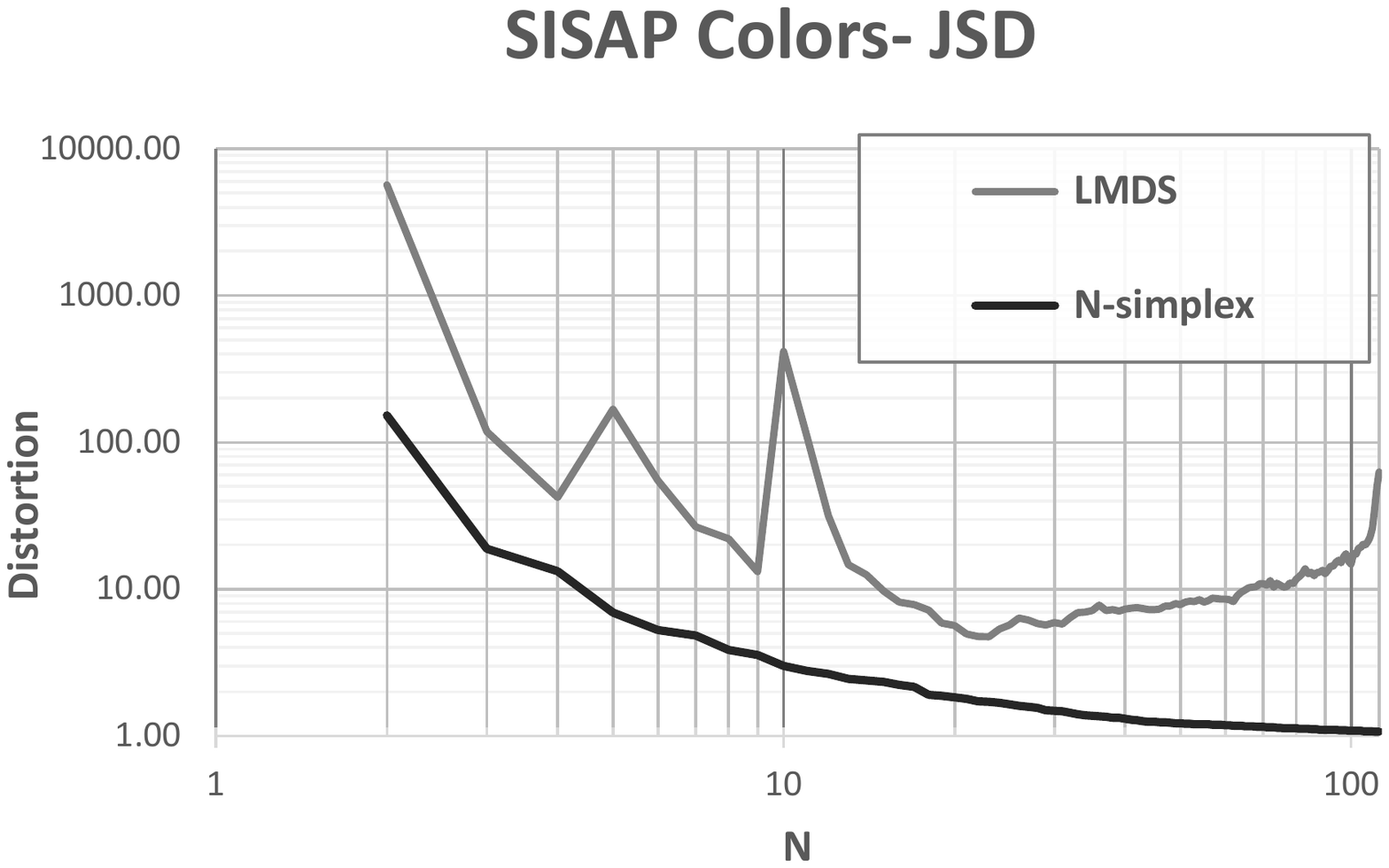}
\caption{Distortion measurements for various dimensionality reduction strategies for the \emph{colors} data set. The left figure gives measurements for Euclidean distance, the right for Jensen-Shannon distance where only LMDS and \ns are applicable. The \emph{colors} data set has 112 physical dimensions.}
	\label{fig:distortions}
\end{figure}

We define distortion for an approximation $(U',d')$ of a space $(U,d)$ mapped by a function $f:U \rightarrow U'$ as as the smallest $D$ such that, for some scaling factor $r$
\[
r \cdot d'(f(u_i),f(u_j)) \quad \le \quad d(u_i,u_j) \quad \le \quad D \cdot r \cdot d'(f(u_i),f(u_j))
\]
We have measured this for a number of different spaces, and present  results over the SISAP \emph{colors} benchmark set which are typical and easily reproducible. Summary results are shown in Figure \ref{fig:distortions}.

In each case, the X-axis represents the number of dimensions used for the representation, with the distortion plotted against this. For Euclidean distance, there are two entries for \ns: one for randomly-selected reference points, and the other where the choice of reference points is guided by the use of PCA. \h{In the latter case we select the first $n$ principal components (eigenvectors of the covariance matrix) as pivots.}

It can  be seen that \ns outperforms all other strategies except for PCA, which is not applicable to non-Euclidean spaces. LMDS is the only other mechanism applicable to general metric spaces%
\footnote{The authors note it works better for some metrics \h{than} others; in our understanding, it will work well only for spaces with the $n$-point property.}%
; this is a little more expensive than \ns to evaluate,
and performs relatively badly. The comparison with JL is a slightly unfair, as the JL  lemma applies only for very high dimensions in an evenly distributed space; we have tested such spaces, and JL is still out-performed at by \ns, especially at lower dimensions.

The distortion we show here is only for the lower-bound function of \ns. We have measured the upper-bound function also, which gives similar results. Unlike the lower-bound, the upper-bound is not a proper metric; however for non-metric approximate search it should be noted that the mean of the lower- and upper-bound functions give around half the distortion plotted here.

The implications of these results for exact search  should be noted. For Euclidean search, it seems that only around 20 dimensions will be required to perform a very accurate search, i.e. one-fifth of the original space. For Jensen-Shannon, more dimensions will be required, but the cost of the $\ell_2$ metric required to search the compressed space is around one-hundredth the cost of the original metric. In the next section we present experimental exact search results consistent with these observations.

\section{Exact Search: Indexing with n-Simplex}
\label{sec_exact_search}
In this section we examine the use of  \ns in the context of exact search, using the lower and upper-bound properties.
%
Any such mechanism can be viewed as similar to LAESA \cite{Mico1994}, in that there exists an underlying data structure
 which is a table of numbers, $n$ per original object, with the intention of using this table to exclude candidates which cannot be within a given search threshold. 
 
 In both cases, $n$ reference objects are chosen from the space. For LAESA, each row of the table is  filled, for one element of the data, with the distances from the candidate to each reference object. For \ns, each row is filled for one element of the data with the Cartesian coordinates of the new apex formed in $n$ dimensions by applying these distances to an $(n-1)$-dimensional simplex formed from the reference objects.

The table having been established, a query notionally proceeds by  measuring the distances from the query object to each reference point object.  In the case of LAESA, the metric for comparison is Chebyshev: that is, if any pairwise difference is greater than the query threshold, the object from which that row was derived cannot be a solution to the query. For \ns, the metric used is $\ell_2$: that is, if the apex represented in a row is further than the query threshold from the apex generated from the query, again the object from which that apex was derived cannot be a solution to the query.

In both cases, there are two ways of approaching the table search. It can be performed sequentially over the whole table, in which case either metric can be terminated within a row if the threshold is exceeded, without continuing to the end of the row. Alternatively the table can itself be re-indexed using a tree search structure: this can be implemented with only a few extra words per item by storing references into the table within the tree structure. Although this compromises the amount of space available for the table itself, it may avoid many of the individual row comparisons.

In the context of re-indexing we also note that, in the case of \ns, the Euclidean metric used over the table rows itself has the four-point property, and so the Hilbert Exclusion property as described in \cite{Connor2016:HilbertExclusion} may be used.

In all cases the result is a filtered set of candidate objects which is guaranteed to contain the correct solution set. In general, this set must be re-checked against the original metric, in the original space. For \ns however the  upper-bound condition is checked first; if this is less than the query threshold, then the object is guaranteed to be an element of the result set and does not require to be re-checked within the original space.

\subsection{Experiment - SISAP \emph{colors}}

\begin{table}[tb]
\caption{Elapsed Times - SISAP \emph{colors}, Euclidean distance.\\
All times are in seconds, for executing 11268 queries over 101414 data. The $Tree$ times are independent of the row and the mean is presented for simplicity.}
\centering
\begin{small}
\begin{tabular}{|c||c|c|c|c|c||c|c|c|c|c||c|c|c|c|c|}
\hline
&\multicolumn{5}{|c||}{$t_0=0.051768$}&\multicolumn{5}{|c||}{$t_1=0.082514$}&\multicolumn{5}{c|}{$t_2=0.131163$}\\
\hline
Dims&$L_{seq}$&$L_{rei}$&$N_{seq}$&$N_{rei}$&\textit{Tree}&$L_{seq}$&$L_{rei}$&$N_{seq}$&$N_{rei}$&\textit{Tree}&$L_{seq}$&$L_{rei}$&$N_{seq}$&$N_{rei}$&\textit{Tree}	\\
\hline
5&18.6&28.0&13.8&5.8&5.5&33.4&80.9&22.4&29.0&18.1&56.2&201.6&34.9&70.4&54.4\\
10&17.7&22.1&15.0&3.3&5.5&30.3&67.9&20.3&14.7&18.1&58.1&220.3&25.5&50.6&54.4\\
15&16.3&15.2&14.6&$\mathbf{3.0}$&5.5&26.7&59.7&20.2&12.1&18.1&45.8&159.5&$\mathbf{24.4}$&44.7&54.4\\
20&19.0&16.3&18.9&3.3&5.5&28.2&56.6&19.4&$\mathbf{11.5}$&18.1&46.8&189.3&27.8&48.3&54.4\\
25&22.5&16.9&20.4&3.4&5.5&27.4&56.8&22.3&13.4&18.1&45.5&167.5&26.2&40.1&54.4\\
30&20.9&16.8&20.4&3.5&5.5&28.6&57.3&24.5&13.6&18.1&45.9&181.2&28.5&45.1&54.4\\
35&22.0&16.4&21.3&3.9&5.5&28.7&65.0&22.5&13.9&18.1&43.9&163.0&31.2&44.9&54.4\\
40&23.1&17.3&22.1&4.0&5.5&28.8&55.9&22.8&14.3&18.1&49.4&180.5&34.2&46.1&54.4\\
45&22.5&18.7&22.2&4.4&5.5&32.0&61.5&27.7&15.0&18.1&48.5&169.8&37.1&44.9&54.4\\
50&21.3&17.1&18.9&4.5&5.5&32.0&59.0&24.0&15.5&18.1&55.2&207.6&34.5&45.3&54.4\\
\hline
\end{tabular}
\end{small}
\label{tab_euc_colors_times}
\end{table}%

We first apply these techniques to the SISAP \emph{colors} \cite{SISAP_man} data set, using three different supermetrics: Euclidean, Cosine, and Jensen-Shannon\footnote{For precise definitions of the non-Euclidean metrics used, see \cite{Connor2016:HilbertExclusion}.}.
We chose this data set because (a) it has only positive values and is therefore indexable by all of the metrics, and (b) it shows an interesting non-uniformity, in that its intrinsic dimensionality for all metrics is much less than its physical dimensionality (112). It should thus give an interesting ``real world" context to assess the relative value of the different mechanisms. For Euclidean distance, we used the  three benchmark thresholds; for the other metrics, we chose thresholds that return around 0.01\% of the data. In all cases the first 10\% of the file is used to query the remaining 90\%. \h{Pivots are randomly-selected both for LAESA and $n$-simplex approach.}

For each metric, we tested different mechanisms with different allocations of space: 5 to 50 numbers per data element, thus the  space used per object is between 4.5\% and 45\%  of the original. All results reported are for exact search, that is the initial filtering is followed by re-testing within the original space where required. Five different mechanism were tested, as follows:

\begin{description}
\item [sequential LAESA ($L_{seq}$)] each row of the table is scanned sequentially,  each element of each row is tested against the query and that row is abandoned if the absolute difference is greater than the threshold.
\item [reindexed LAESA ($L_{rei}$)] the data in the table is indexed using a monotone hyperplane tree, searched using the Chebyshev metric.
\item [sequential \ns ($N_{seq}$)]  each row of the table is scanned sequentially,  for each element of each row the square of the absolute difference is added to an accumulator, the row is abandoned if the accumulator exceeds the square of the threshold, and the upper-bound is applied if the end of the row is reached before re-checking in the original space.
\item [reindexed \ns ($N_{rei}$)] the data in the table is indexed using a monotone hyperplane tree using the Hilbert Exclusion property, and searched using the Euclidean metric; the upper-bound is applied for all results, before re-checking in the original space.
\item [normal indexing (${Tree}$)] the space is indexed using a monotone hyperplane tree with the Hilbert Exclusion property, without the use of reference points.
\end{description}

The monotone hyperplane tree is used as, in previous work, this has been found to be the best-performing simple indexing mechanism for use with Hilbert Exclusion.
\vspace{-10pt}
\subsubsection{Measurements} 
Three different figures are measured for each mechanism: the elapsed time, the number of original-space distance calculations performed and, in the case of the re-indexing mechanisms, the number of re-indexed space calculations. 
All code is available online for independent testing%
\footnote{https://richardconnor@bitbucket.org/richardconnor/metric-space-framework.git}.

The tests were run on a 2.8 GHz Intel Core i7, running on an otherwise bare machine without network interference. The code is written in Java, and all data sets used fit easily into the Java heap without paging or garbage collection occurring.
\vspace{-10pt}
\subsubsection{Results} 
	\begin{table}[tbp]
		\caption{Elapsed Times - SISAP \emph{colors} with Cosine and Jensen-Shannon distances, and a 30-dimensional generated Euclidean space. \\All times are in seconds. The generated Euclidean space is evenly distributed in $[0,1]^{30}$, and gives the elapsed time for executing 1,000 queries against 9,000 data, with a threshold calculated to return one result per million data ($t$=0.7269)}
			\centering
			\resizebox{\textwidth}{!}{
			\begin{tabular}{|c||c|c|c|c|c||c|c|c|c|c||c||c|c|c|c|c|}
				\hline
				&\multicolumn{10}{|c||}{SISAP \textit{colors}}&\multicolumn{6}{c|}{\multirow{2}{*}{30-dim $\ell_2^{30}$}}\\
				\cline{2-11}
				&\multicolumn{5}{|c||}{Cosine (t=0.042)}&\multicolumn{5}{|c||}{Jensen-Shannon (t=0.135)}&\multicolumn{6}{c|}{}\\
				\hline
				{\scriptsize Dims}&$L_{seq}$&$L_{rei}$&$N_{seq}$&$N_{rei}$&\textit{Tree}&$L_{seq}$&$L_{rei}$&$N_{seq}$&$N_{rei}$&\textit{Tree}&{\scriptsize Dims}&$L_{seq}$&$L_{rei}$&$N_{seq}$&$N_{rei}$&\textit{Tree}	\\
				\hline
				5&10.3&4.5&8.8&1.0&3.1&248.4&335.5&61.9&65.5&124.8&3&0.5&2.5&0.5&1.6&1.4\\
				10&9.8&3.4&10.4&0.8&3.1&155.3&233.2&29.0&29.3&124.8&6&0.5&2.3&0.5&1.8&1.4\\
				15&12.7&2.4&11.7&$\mathbf{0.7}$&3.1&103.5&163.2&22.3&17.2&124.8&9&0.5&2.4&0.4&1.3&1.4\\
				20&16.5&2.8&16.7&$\mathbf{0.7}$&3.1&95.7&162.8&23.8&$\mathbf{14.7}$&124.8&12&0.5&2.6&0.3&1.2&1.4\\
				25&17.9&2.8&17.7&0.8&3.1&87.2&155.6&25.9&16.1&124.8&15&0.5&2.8&0.3&1.0&1.4\\
				30&18.1&2.6&17.4&0.9&3.1&67.7&130.4&27.0&16.5&124.8&18&0.6&3.4&0.3&1.0&1.4\\
				35&17.7&3.1&17.1&1.1&3.1&69.6&136.3&27.9&17.2&124.8&21&0.6&3.3&$\mathbf{0.2}$&1.1&1.4\\
				40&18.1&3.0&18.1&1.0&3.1&62.4&131.2&27.8&17.1&124.8&24&0.7&2.9&$\mathbf{0.2}$&1.1&1.4\\
				45&17.4&2.7&18.2&1.1&3.1&61.1&133.4&29.7&18.4&124.8&27&0.7&3.5&0.3&1.2&1.4\\
				50&17.6&3.5&17.3&1.4&3.1&58.3&130.4&30.6&18.6&124.8&30&0.7&3.5&0.3&1.4&1.4\\
				\hline
			\end{tabular}
		}
		\label{tab_non_euc_colors_times}
	\end{table}%
	\begin{table}[tbp]
		\renewcommand{\arraystretch}{1.05}
		\caption{ Distance Calculations Performed  in Original and Re-indexed Space\\(figures given are thousands of calculations per query)}
		\centering
		\begin{small}
			\begin{tabular}{|c||c|c|c||c|c||c|c|c||c|c|}
				\hline
				&\multicolumn{5}{|c||}{Euclidean (t=0.051768)}&\multicolumn{5}{|c|}{Jensen-Shannon (t=0.135)}\\
				\cline{2-11}
				&\multicolumn{3}{|c||}{Original Space}&\multicolumn{2}{|c||}{Re-indexed}
				&\multicolumn{3}{|c||}{Original Space}&\multicolumn{2}{|c|}{Re-indexed}\\
				\hline
				Dims&$L$&$N$&\textit{Tree}&$L_{rei}$&$N_{rei}$&$L$&$N$&\textit{Tree}&$L_{rei}$&$N_{rei}$	\\
				\hline
				5&2.75&0.38&1.48&5.28&1.76&12.77&2.29&5.97&18.40&6.91\\
				10&1.33&0.05&1.48&4.40&1.23&7.81&0.58&5.97&19.66&6.32\\
				15&0.57&0.04&1.48&3.24&1.13&4.62&0.16&5.97&15.46&4.99\\
				20&0.51&0.03&1.48&3.42&1.15&3.89&0.11&5.97&15.85&4.80\\
				25&0.43&0.04&1.48&3.15&1.18&3.65&0.09&5.97&14.88&4.87\\
				30&0.37&0.04&1.48&3.02&1.21&2.53&0.08&5.97&13.83&4.70\\
				35&0.34&0.04&1.48&2.85&1.31&2.59&0.08&5.97&13.56&4.86\\
				40&0.33&0.04&1.48&2.95&1.29&2.14&0.08&5.97&13.48&4.64\\
				45&0.31&0.05&1.48&2.82&1.32&1.95&0.08&5.97&13.74&4.89\\
				50&0.27&0.05&1.48&2.57&1.33&1.83&0.08&5.97&12.63&4.87\\
				\hline
			\end{tabular}
		\end{small}
		\label{tab_orig_dists}
	\end{table}%

As can be seen in Table \ref{tab_euc_colors_times}, $N_{rei}$ consistently and significantly outperforms the normal index structure at between 15 and 25 dimensions, depending on the query threshold. It is also interesting to see that, as the query threshold increases, and therefore scalability decreases, $N_{seq}$ takes over as the most efficient mechanism, again with a ``sweet spot" at 15 dimensions.

Table \ref{tab_non_euc_colors_times} shows the same experiment performed with Cosine and Jensen-Shannon distances. In these cases, the extra relative cost saving from the more expensive metrics is very clear, with relative speedups of 4.5 and 8.5 times respectively. 
In the Jensen-Shannon tests, the relatively very high cost of the metric evaluation to some extent masks the difference between 
$N_{seq}$ and $N_{rei}$, but we  note that the latter maintains scalability while the former does not.  Finally, in the essentially intractable Euclidean space, with a relatively much smaller search threshold, $N_{seq}$ takes over as the fastest mechanism.
\vspace{-10pt}
\subsubsection{Scalability} 
Table \ref{tab_orig_dists} shows the actual number of distance measurements made, for Euclidean and Jensen-Shannon searches of the \emph{colors} data. The number of calls required in both the original and re-indexed spaces are given. Note that original-space calls are the same for both table-checked and re-indexed mechanisms; the number of original-space calls include those to the reference points, from which the accuracy of the \ns mechanism even in small dimensions can be appreciated. By 50 dimensions almost perfect accuracy is achieved for Euclidean search 50 original-space calculations are made,  but in fact even at 10 dimensions almost every apex value can be  deterministically determined as either a member or otherwise of the solution set based on its upper and lower bounds. At 20 dimensions, only 10 elements of the 101414-element data set have bounds which straddle the query threshold. \h{This indeed reflects the results presented in Figure} \ref{fig:distortions} \h{where it is shown that for $n\geq20$ the n-simplex lower bound is practically equivalent to the Euclidean distance to search \textit{colors} data.}

Equally interesting is the number of re-indexed space calls. This gave us a considerable surprise, and is the subject of further investigation: for \ns, these are generally  less than for the original space, including for tests made which are not presented here. This seems to hold for all data  other than perfectly evenly-distributed (generated sets), for which the scalability is the same. The implication is that the re-indexed metric has better scalability properties than the original, although we would have expected indexing over the lower-bound function to be less, rather than more, scalable.


\section{Conclusions and Further Work}

We have used the \ns technique to give best-recorded benchmark performance for exact search over the SISAP \emph{colors} data set for some different metrics. It should however be noted that here we are only trying to demonstrate the potential value of the \ns bounds mechanism in a simple and reproduceable context; as noted it is likely to be most effective in cases where the data set does not fit into memory, and where the metric used is very expensive. We emphasise that in all of our tests the whole data fits in main memory, and a recheck into the original space is relatively cheap.  We believe the real power of this technique will emerge with huge data sets and more expensive metrics, and is yet to be experienced.

\begin{small}
\subsubsection*{Acknowledgements}
The work was partially funded by Smart News, ``Social sensing for breaking news", co-funded by the Tuscany region under the FAR-FAS 2014 program, CUP CIPE D58C15000270008.
\bibliographystyle{plain}
\bibliography{bib/connor_abr,bib/general_abr}
\end{small}

\newpage
\appendix

\section*{Appendix}
\begin{lemma}[Correctness of the ApexAddition algorithm]
	Let $\Sigma_{\text{Base}}\in \R{n\times n-1}$ representing a $(n-1)$-dimensional simplex of vertices $\Sigma_{\text{Base}}[i]\in \ell_2^{n-1}$, with $\Sigma_{\text{Base}}[i][j]=0$ for all $j\geq i$ and $\Sigma_{\text{Base}}[n][n-1]\geq0$. Let $v_i$ the corresponding vertices in $\ell_2^n$ (obtained from $\Sigma_{\text{Base}}[i]$ by adding a zero to the end of the vector) and let $\delta_i$ the distance between an unknown apex point and the vertex $v_i$.
	Let $o=\begin{bmatrix} o_1& \dots & o_n\end{bmatrix}$ the output  of the \emph{ApexAddition} Algorithm. Then $o$ is a feasible apex, i.e. it is a point in $\R{n}$ satisfying $\ell_2(o,v_i)=\delta_i$ for all $1\leq i\leq n$. The last component $o_n$ is non-negative and represents the \emph{altitude} of $o$ with respect to a base face  $\Sigma_{\text{Base}}$.
\end{lemma}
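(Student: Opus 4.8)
The plan is to prove the lemma by induction on the working dimension, carrying a loop invariant that after the $i$-th pass of the loop in \emph{ApexAddition} the partial vector $o$ already realises the first $i$ distance constraints. First I would record the geometry of the lifted base. Since every row of $\Sigma_{\text{Base}}$ vanishes from its diagonal onwards, each lifted vertex has the form $v_j = (v_{j,1},\dots,v_{j,j-1},0,\dots,0)$, so $v_1 = 0$ and all of $v_1,\dots,v_i$ lie in the subspace where coordinates $i,\dots,n$ are zero. Hence the initialisation $o = (\delta_1,0,\dots,0)$ already satisfies $\ell_2(o,v_1) = \|o\| = \delta_1$, which, together with the first update at pass $i=2$, supplies the base of the induction.

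For the inductive step I would isolate the single algebraic identity that drives the loop. Assume after pass $i$ that $o = (o_1,\dots,o_i,0,\dots,0)$ with $\ell_2(o,v_j) = \delta_j$ for all $j \le i$, and observe from the index bookkeeping that pass $i+1$ alters only coordinates $i$ and $i+1$. The key point is an \emph{altitude split}: because each $v_j$ with $j\le i$ has zero in coordinates $i$ and $i+1$, replacing the pair $(o_i,0)$ by a pair $(o_i',o_{i+1}')$ with $(o_i')^2 + (o_{i+1}')^2 = o_i^2$ leaves $\ell_2(o,v_j)$ unchanged for every $j\le i$; this is exactly the assignment $o_{i+1}\leftarrow \sqrt{y^2 - (\text{new }o_i)^2}$ with $y = o_i$. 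It then remains to fix the new value of $o_i$ so that the single new constraint $\ell_2(o,v_{i+1}) = \delta_{i+1}$ holds. Expanding $\|o'-v_{i+1}\|^2$, substituting the altitude-split relation, and writing $l^2 = \|o - v_{i+1}\|^2$ for the current point collapses every quadratic term and yields the linear identity $\|o'-v_{i+1}\|^2 = l^2 + 2x(o_i - o_i')$ with $x = v_{i+1,i}$; setting this equal to $\delta_{i+1}^2$ gives precisely $o_i' = o_i - (\delta_{i+1}^2 - l^2)/(2x)$, which is the update in the algorithm. Pushing the induction to $i=n$ yields $\ell_2(o,v_j)=\delta_j$ for all $j$. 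The final coordinate $o_n$ is the distance of $o$ to the hyperplane $\{x_n=0\}$ containing the base face, hence is the claimed altitude, and it is non-negative since the algorithm takes the positive square root.

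The main obstacle I expect lies not in this algebra but in the well-definedness of the construction: each pass extracts a real root $\sqrt{y^2-(\text{new }o_i)^2}$, so I must show its radicand is non-negative, and the division by $2x$ requires $x = v_{i+1,i}>0$. The latter holds whenever the base is a non-degenerate simplex, for which every sub-altitude on the diagonal is strictly positive. For the former I would use the standing hypothesis that the $\delta_i$ are the distances to a genuine apex point $a$: the whole configuration $\{v_1,\dots,v_n,a\}$ is then a bona fide Euclidean distance matrix, so each prefix $\{v_1,\dots,v_{i+1},a\}$ is likewise realisable, and the radicand at pass $i+1$ is exactly the squared altitude of that sub-apex over the sub-face $v_1,\dots,v_{i+1}$, hence non-negative. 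The residual bookkeeping — that coordinates $o_1,\dots,o_{i-1}$ are never rewritten after pass $i$, so the invariant is internally consistent — is routine and follows by inspecting which indices each pass writes.
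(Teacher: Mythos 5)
Your proof is correct and follows essentially the same route as the paper's: both verify inductively that each loop iteration preserves the already-established distance constraints (your ``altitude split'' is exactly the paper's identity $(o_{i-1})^2=(o_{i-1}^{(i-1)})^2-(o_{i}^{(i)})^2$) while the linear update in coordinate $i-1$ enforces the one new constraint. The only genuine addition is your discussion of well-definedness --- strict positivity of the diagonal entries used as divisors and non-negativity of the radicand via realizability of the sub-configuration containing the true apex --- which the paper's proof silently assumes.
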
 
\begin{proof}
	It is sufficient to prove that the output $o=\begin{bmatrix} o_1& \dots & o_n\end{bmatrix}$ of the Algorithm \ref{alg:apex} has distance $\delta_i$ from the vertex $v_i$, i.e. satisfies the following equations
	\begin{equation}\label{eq:apexSystem}
	\begin{cases}
	o_1^2+\dots +  o_n^2=\delta_{1}^2 & \qquad(\ref{eq:apexSystem}.1)\\
	\qquad \colon\\
	\sum_{j=1}^{i-1} (v_{i,j}- o_j)^2+ \sum_{j=i}^n  o_j^2=\delta_i^2 & \qquad(\ref{eq:apexSystem}.i)\\
	\qquad \colon\\
	\sum_{j=1}^{n-1} (v_{n,j}-o_j)^2+  o_n^2=\delta_n^2 &\qquad (\ref{eq:apexSystem}.n)\\
	\end{cases}
	\end{equation}
	
	Note that the $i$-th component of the output $o$ is updated only at the iteration $i$ and $i+1$ of the \emph{ApexAddition} Algorithm. 
	So, if we denote with $o^{(i)}$ the output at the end of iteration $i$ we have:
	\begin{align}
	& o^{(1)}=\begin{bmatrix} \delta_1& 0&\dots & 0\end{bmatrix} \label{eq:o1}\\
	& o_i= o^{(h)}_i, \quad o_n=o^{(n)}_n,\quad o^{(i)}_h=0&  1\leq i < h\leq n   \label{eq:p0} \\
	&o_{i-1}=o_{i-1}^{(i-1)}-\frac{\delta_{i}^2-\sum_{j=1}^{i-2}(v_{i,j}-o_{j})^2-(v_{i,i-1}-o_{i-1}^{(i-1)})^2}{2v_{i,i-1}} &  2\leq i \leq  n \label{eq:p2b}\\
	&( o_{i-1})^2=( o_{i-1}^{(i-1)})^2- (o_{i}^{(i)})^2 &  1\leq i \leq  n-1 \label{eq:p3b}
	\end{align}
	By combining  Eq. \eqref{eq:p0} and \eqref{eq:p3b} we obtain $\sum_{j=i}^{n} o_j^2= (o_i^{(i)})^2$ for all $1\leq i \leq  n-2$,
	and so Eq. (\ref{eq:apexSystem}.1) clearly holds (case $i=1$). Moreover, 
	it follows that $o$ satisfies Eq. (\ref{eq:apexSystem}.$i$) for all $i=2,\dots,n$:
	\begin{align*}
	\sum_{j=1}^{i-1} (v_{i ,j}-o_j)^2+ \sum_{j=i }^n o_j^2& 
	=v_{i ,i-1}^2 -2v_{i ,i-1}\,o_{i-1}+\sum_{j=1}^{i-2} (v_{i ,i-1}-o_j)^2+(o_{i-1}^{(i-1)})^2 \stackrel{\eqref{eq:p2b}}{=} \delta_{i }^2
	\end{align*}
	\qed
\end{proof}
\begin{lemma}[n-Simplex Distance Constraint]
	Let $(U,d)$ a space $(n+2)$-embeddable in $\ell_2^{n+1}$. Let $p_1,\dots,p_n \in U$ and, for any $m\leq n$, let $\sigma_{m}$ the $(m-1)$-dimensional simplex generated from $p_1,\dots,p_{m}$ by using the \emph{nSimplexBuild} Algorithm.
	For any $x\in U$, let $x^{(m)}\in \ell_2^{m}$ the apex point with distance $d(x,p_1), \dots,$ $ d(x,p_m)$ from the vertices of $\sigma_{m}$, computed using the \emph{ApexAddition} Algorithm. 
	Then for all $q,s\in U$, 
	\begin{enumerate}
		\item 
		$\ell_2^{m-1} (s^{(m-1)},q^{(m-1)}) \leq \ell_2^{m}(s^{(m)},q^{(m)}) \quad\quad\quad \quad \mbox{for}\quad 2\leq m \leq n \label{eq:simplexlw}$ \label{cond1}
		\medskip
		\item 
		$g (s^{(m-1)},q^{(m-1)}) \geq g(s^{(m)},q^{(m)})  \,\quad\quad\quad\quad\quad\quad \mbox{for}\quad 2\leq m \leq n \label{eq:simplexub}$ \label{condub}
		\medskip
		\item 
		$\ell_2^n(s^{(n)},q^{(n)}) \leq d(s,q) \leq g(s^{(n)},q^{(n)})$
	\end{enumerate}
	where, for any $k\in \mathbb{N}$, $g:\ell_2^{k}\to \ell_2^{k}$ is defined as $g(x,y)=\sqrt{\sum_{i=1}^{k-1} (x_i-y_i)^2+(x_k+y_k)^2}$.
\end{lemma}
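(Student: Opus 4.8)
The plan is to reduce all three parts to a single geometric fact about how the apex coordinates change when one pivot is added to the base, after which each inequality collapses to a two-dimensional (reverse) triangle inequality. Throughout write $s^{(m-1)}=(a_1,\dots,a_{m-2},A)$ and $s^{(m)}=(a_1,\dots,a_{m-2},\alpha,\alpha')$, and symmetrically $q^{(m-1)}=(b_1,\dots,b_{m-2},B)$, $q^{(m)}=(b_1,\dots,b_{m-2},\beta,\beta')$. The fact to establish is the \emph{altitude-splitting identity}: the first $m-2$ coordinates of $s^{(m)}$ and $s^{(m-1)}$ coincide, while the final coordinate of $s^{(m-1)}$ is the length of the last two coordinates of $s^{(m)}$, that is $A=\sqrt{\alpha^2+\alpha'^2}$ with $A,\alpha'\ge 0$ (and the same for $q$). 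Geometrically, lifting the apex from $\ell_2^{m-1}$ to $\ell_2^{m}$ leaves its projection onto the base hyperplane unchanged and merely redistributes its altitude $A$ between a new horizontal axis pointing toward the added pivot $p_m$ and a fresh altitude axis.

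First I would prove this identity by tracing \emph{ApexAddition} on the two nested bases. Because \emph{nSimplexBuild} forms $\sigma_m$ by adjoining $p_m$ to $\sigma_{m-1}$, the first $m-1$ vertices of $\sigma_m$ are exactly those of $\sigma_{m-1}$ with a trailing zero. Since the running output vector holds zeros in every position $\ge i$ at the start of iteration $i$, that trailing zero never affects $\ell_2(\Sigma_{Base}[i],\cdot)$ for $i\le m-1$; hence iterations $2,\dots,m-1$ are \emph{identical} in the two runs, so the first $m-2$ coordinates agree and the value of coordinate $m-1$ after iteration $m-1$ equals the altitude $A$ of $s^{(m-1)}$. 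Iteration $m$ then rewrites this one coordinate as the pair $(\alpha,\alpha')$, and the square-root relation \eqref{eq:p3b} of the Correctness Lemma, applied at this final iteration, yields $\alpha^2+\alpha'^2=A^2$ with $\alpha'\ge 0$. This matching of the two algorithm runs is the step I expect to demand the most care.

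With the identity in place, put $\mathbf u=(\alpha,\alpha')$ and $\mathbf v=(\beta,\beta')$, so $A=|\mathbf u|$ and $B=|\mathbf v|$. The squared quantities in Parts~1 and~2 differ from their level-$(m-1)$ counterparts only in the last block. For Part~1 the claim reduces to $(A-B)^2\le(\alpha-\beta)^2+(\alpha'-\beta')^2$, i.e.\ $(|\mathbf u|-|\mathbf v|)^2\le|\mathbf u-\mathbf v|^2$, which is the reverse triangle inequality. For Part~2, reflecting the altitude coordinate of $\mathbf v$ to $\mathbf v^{*}=(\beta,-\beta')$ rewrites the last block of $g(s^{(m)},q^{(m)})^2$ as $|\mathbf u-\mathbf v^{*}|^2$, and the triangle inequality $|\mathbf u-\mathbf v^{*}|\le|\mathbf u|+|\mathbf v|=A+B$ gives $g(s^{(m)},q^{(m)})\le g(s^{(m-1)},q^{(m-1)})$.

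Part~3 is the same move carried one level higher, and is where the hypothesis that $(U,d)$ is $(n+2)$-embeddable in $\ell_2^{n+1}$ enters. I would embed $p_1,\dots,p_n,s,q$ isometrically into $\ell_2^{n+1}$ and, using uniqueness of a simplex up to rigid motion, align the image of $\sigma_n$ with the constructed copy lying in $\{x_n=x_{n+1}=0\}$. The image $\hat s$ then has its first $n-1$ coordinates equal to those of $s^{(n)}$ and altitude vector $\mathbf u=(\hat s_n,\hat s_{n+1})$ of length $s^{(n)}_n$, and similarly for $\hat q$, so that $d(s,q)^2=\sum_{i=1}^{n-1}(s^{(n)}_i-q^{(n)}_i)^2+|\mathbf u-\mathbf v|^2$. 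The double bound $(|\mathbf u|-|\mathbf v|)^2\le|\mathbf u-\mathbf v|^2\le(|\mathbf u|+|\mathbf v|)^2$ then sandwiches $d(s,q)$ between $\ell_2^n(s^{(n)},q^{(n)})$ and $g(s^{(n)},q^{(n)})$. The delicate point here, and the second place needing care, is verifying that $\hat s$ really carries the computed horizontal coordinates and the preserved altitude; this rests on the base simplex $\sigma_n$ being non-degenerate, so that the distances to its vertices fix a point up to reflection across the hyperplane.
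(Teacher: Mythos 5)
Your proposal is correct and follows essentially the same route as the paper's proof: the same coordinate relations between consecutive levels (first $m-2$ coordinates preserved, altitude split as $A^2=\alpha^2+\alpha'^2$), the same reduction of Parts 1 and 2 to a two-dimensional inequality on the altitude block (your reverse-triangle/triangle inequality is exactly the paper's Cauchy--Schwarz step), and the same embedding of $p_1,\dots,p_n,s,q$ into $\ell_2^{n+1}$ over the fixed base simplex for Part 3. The only differences are presentational: you trace the two runs of \emph{ApexAddition} explicitly to justify the altitude-splitting identity, which the paper simply asserts ``by construction,'' and you flag the non-degeneracy of the base simplex, which the paper leaves implicit.
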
 
\begin{proof}
	By construction, for any $m\leq n$ we have
	\begin{align}
	& x_i^{(m)}= x_i^{(m-1)} & i=1,\dots, m-2 \label{eq:c1}\\ 
	& x_i^{(i)}\geq 0 & i=1,\dots, m \label{eq:c3}\\
	&(x_{m-1}^{(m)})^2+( x_{m}^{(m)})^2 ={(x_{m-1}^{(m-1)})^2} \label{eq:c2}
	\end{align}
	Condition \ref{cond1} directly follows from Eq. \eqref{eq:c1}-\eqref{eq:c2}:
	\begin{align*}
	\ell_2^{m} (s^{(m)},  q^{(m)})^2& 
	=\ell_2^{m-1} (s^{(m-1)},q^{(m-1)})^2- (s^{(m-1)}_{m-1}-q^{(m-1)}_{m-1})^2+ \sum_{i=m-1}^{m}(s^{(m)}_i-q^{(m)}_i)^2\\
	&=\ell_2^{m-1} (s^{(m-1)},q^{(m-1)})^2 + 2\Big[-s^{(m)}_{m-1}q^{(m)}_{m-1}-s^{(m)}_{m}q^{(m)}_{m}\\
	&\qquad \qquad\qquad\qquad\qquad\quad+\sqrt{(s_{m-1}^{(m)})^2+(s_{m}^{(m)})^2 } \sqrt{(q_{m-1}^{(m)})^2+(q_{m}^{(m)})^2 }\Big]\\
	&\geq \ell_2^{m-1} (s^{(m-1)},q^{(m-1)})^2
	\end{align*}
	where the last passage follows from the Cauchy–Schwarz inequality
	\footnote{Cauchy–Schwarz inequality in two dimension is: $(a_1b_1+a_2b_2)^2\leq (a_1^2+a_2^2)(b_1^2+b_2^2)$ $\forall a_1,b_1,a_2,b_2 \in \mathbb{R}$, which implies $$(a_1b_1+a_2b_2)\leq \sqrt{(a_1^2+a_2^2)}\sqrt{(b_1^2+b_2^2)} \quad \forall a_1,b_1,a_2,b_2 \in \mathbb{R}$$}.
	
	Similarly, Condition \ref{condub} also holds: 
	\begin{align*}
	g (s^{(m)}, q^{(m)})^2&=g(s^{(m-1)},q^{(m-1)})^2 + 2\Big[-s^{(m)}_{m-1}q^{(m)}_{m-1}+s^{(m)}_{m}q^{(m)}_{m}\\
	&\qquad \qquad\qquad\qquad\qquad\quad-\sqrt{(s_{m-1}^{(m)})^2+(s_{m}^{(m)})^2 } \sqrt{(q_{m-1}^{(m)})^2+(q_{m}^{(m)})^2 }\Big]\\
	&\leq g(s^{(m-1)},q^{(m-1)})^2.
	\end{align*}

	Now we prove that $\ell_2^n(s^{(n)},q^{(n)})$ and $g(s^{(n)},q^{(n)})$ are, respectively, a lower bound and an upper bound for the actual distance $d(s,q)$.
	The main idea is using the simplex  $\sigma_{n}$ spanned by $p_1,\dots, p_n$ as a base face to  build the simplex  $\sigma_{n+1}$ spanned by $p_1,\dots, p_n, s$ and then use the latter as base face to build the simplex $\sigma_{n+2}$ spanned by $p_1,\dots, p_n, s,q$. In this way, we have an isometric embedding of $p_1,\dots, p_n, s,q $ into $\ell_2^{n+1}$ that is the function that maps  $p_1,\dots, p_n, s,q$  into the vertices  of $\sigma_{n+2}$. 
	So, given the base simplex $\sigma_{n}$ (represented by the matrix $\Sigma_{n}$), and the apex $s^{(n)}, q^{(n)}\in \ell_2^n$ we have that the simplex  $\sigma_{n+2}$  is represented by
	\begin{equation}
	\Sigma_{n+2} = 
	\left[\begin{array}{ccc|cc}
	\multicolumn{3}{c|}{\multirow{4}{*}{\large{$\Sigma_{n}$}}}& \multicolumn{2}{c}{\multirow{4}{*}{\large{$0$}}}\\
	\multicolumn{3}{c|}{}\\
	\multicolumn{3}{c|}{}   \\
	\hline
	s^{(n)}_1		&	\cdots		&  s^{(n)}_{n-1}&   s^{(n)}_n & 0	\\
	q^{(n)}_1	&	\cdots		& q^{(n)}_{n-1}&	q^{(n+1)}_{n}	   	& q^{(n+1)}_{n+1}	
	\end{array}\right]\in \R{n+2\times n+1} 
	\end{equation}
	where, by construction, $(q^{(n+1)}_{n+1})^2={(q^{(n)}_{n})^2-(q^{(n+1)}_{n})^2}$, $s^{(n)}_n,q^{(n+1)}_{n+1} \geq 0$, and $d(q,s)$ equals the euclidean distance between the two last rows of $\Sigma_{n+2}$.
	
	It follows that
	\begin{align}
	d(q,s)^2
	&= \sum_{i=1}^{n-1}(s^{(n)}_i-q^{(n)}_i)^2+(s^{(n)}_n)^2+(q^{(n)}_n)^2-2s^{(n)}_nq^{(n+1)}_n; 
	\end{align}
	and, since $q^{(n)}_n\geq |q^{(n+1)}_n|$, we have
	$$
	d(q,s)^2=\ell_2^n(s^{(n)},q^{(n)})^2 +2s^{(n)}_n (q^{(n)}_n-q^{(n+1)}_n) \geq \ell_2^n(s^{(n)},q^{(n)})^2,
	$$
	and
	$$
	d(q,s)^2=g(s^{(n)},q^{(n)})^2 -2s^{(n)}_n (q^{(n)}_n+q^{(n+1)}_n)\leq g(s^{(n)},q^{(n)})^2
	$$
	\qed
\end{proof}
\end{document}